\documentclass[pra,preprint,superscriptaddress,aps,longbibliography]{revtex4-2}
\usepackage{natbib}
\usepackage{graphicx}
\usepackage{enumitem}
\usepackage{array}
\usepackage{amsmath, bm}
\usepackage{amsthm}
\usepackage{amssymb}
\usepackage{latexsym}
\usepackage{amsfonts}
\usepackage{mathrsfs}
\usepackage{color}
\usepackage{float}
\usepackage{url}
\usepackage{physics}
\usepackage{subcaption}
\usepackage{algorithm}
\usepackage{algpseudocode}
\newtheorem{theorem}{Theorem}

\newtheorem{lemma}{Lemma}

\usepackage{tikz}
\usepackage{quantikz}

\DeclareGraphicsExtensions{.pdf,.png,.jpg}
\usepackage[colorlinks=true, urlcolor=blue, citecolor=blue, 
linkcolor=blue, pdfusetitle=false]{hyperref}

\allowdisplaybreaks
\newcommand{\vmu}{\boldsymbol{\mu}}
\newcommand{\vpsi}{\boldsymbol{\psi}}

\newcommand{\vW}{\boldsymbol{W}}

\newcommand{\vvh}{\boldsymbol{h}}
\newcommand{\mzero}{\boldsymbol{0}_m}

\begin{document}
	\title{An iterative Bayesian variational quantum eigensolver with the von Mises--Fisher distribution}
	
	\author{Trung Minh Huynh}
	\affiliation{Department of Physics, Hanyang University, Seoul, 04763, Republic of Korea}
	\author{Gwangil An}
	\affiliation{Department of Physics, Hanyang University, Seoul, 04763, Republic of Korea}
	\author{Minsu Kim}
	\affiliation{Department of Physics, Hanyang University, Seoul, 04763, Republic of Korea}
	\author{Yu-Seong Jeon}
	\affiliation{Department of Physics, Hanyang University, Seoul, 04763, Republic of Korea}
	\author{Jinhyoung Lee}
	\email{hyoung@hanyang.ac.kr}
	\affiliation{Department of Physics, Hanyang University, Seoul, 04763, Republic of Korea}
	\affiliation{Center for Quantum Simulation, Korea Institute of Science and Technology (KIST), Seoul, 02792, Republic of Korea}
	
\begin{abstract}
	We develop an iterative Bayesian variational quantum eigensolver in which
	the unknown ground state, parametrised as a real unit vector, is inferred
	through a von Mises--Fisher distribution sequentially updated from
	Hadamard-test measurements on a single ancilla qubit. Building on a
	companion paper that treated a single success outcome via exact moment
	matching, we generalise the scheme to condition on $m$ success outcomes for the likelihood function
	per iteration. Since exact moment matching becomes impractical for $m>1$,
	we exploit the large-concentration regime, in which a first-order
	expansion of the log-likelihood yields a closed-form von Mises--Fisher
	posterior with $m$ entering only through a multiplicative weight. We
	prove that the two convergence properties of the single-outcome
	scheme---non-decreasing overlap and non-decreasing concentration---are
	preserved provided $m$ lies within an explicitly characterised range. Based on the theory, we propose a  measurement-based algorithm and illustrate the convergence on randomly generated three- and	four-qubit Hamiltonian matrices, on $\mathrm{HeH}^+$, and on $\mathrm{H}_2$.
\end{abstract}
	\maketitle	
	
\section{Introduction}
\label{sec:intro}

Determining the ground state of a quantum many-body Hamiltonian is a central
task in quantum chemistry as well as condensed-matter physics, and the exponential
growth of the Hilbert space renders exact classical diagonalisation infeasible
beyond small systems~\cite{Feynman.1982, Lloyd.1996, Cao.2019, Bauer.2020, McArdle.2020}. The variational quantum eigensolver (VQE) of Peruzzo
\emph{et al.}~\cite{Peruzzo.2014} became a leading near-term approach, yet its
deployment faces barriers that are structural rather than incidental: the attainable accuracy is strongly affected by the choice of ansatz and the classical optimiser ~\cite{TILLY20221, Tang.2021, Fedorov.2022, Bonet.2023, Lavrijsen.2020} ; convergence is
non-monotonic, with no step-by-step guarantee of improvement ~\cite{TILLY20221,Harwood.2022, You.2022, legnini.2026}; the number of
measurement settings grows with system size~\cite{TILLY20221, Verteletskyi.2020}; and, most fundamentally, the
gradient of the energy cost function can vanish exponentially with the number of
qubits---the \emph{barren-plateau}
phenomenon---an obstruction that persists even for gradient-free
optimisers~\cite{McClean2018,Cerezo.2021,Larocca.2025,Cunningham.2025, Arrasmith.2021}. These barriers share a common origin: optimization occurs on an ansatz-defined landscape, whose expressibility, depth, initialization, and symmetry properties can dominate trainability more than the target problem alone~\cite{Liu.2024, Wang.2024, Grimsley.2023, Leone.2024}.

A distinct route avoids the parameter landscape altogether by coupling the
target system to a single ancilla qubit and extracting spectral information from
a two-outcome measurement whose circuit complexity does not grow with the
Hilbert-space dimension. This underlies the eigenstate and spectrum algorithms~\cite{Wang.2016,Wang.2010}, the eigenstate-witness scheme of Santagati
\emph{et al.}~\cite{Santagati.2018}, and quantum phase
estimation~\cite{Dobsicek.2007, Wiebe.2016, Marco.2018, OBrien.2019, Lin.2022, Jakob.2026} or Hamiltonian parameters learning~\cite{Granade.2012, Jianwei.2017,Mirani.2024}. These methods, however, either require a problem-specific ansatz that reintroduces the very
design burden of VQE, or infer only a low-dimensional object such as an
eigenphase rather than the full state.

The companion manuscript~\cite{Companion.paper} laid the theoretical
groundwork for such a scheme by placing a full Bayesian posterior \emph{directly
	over the state vector}, represented as a von Mises--Fisher (vMF) distribution on
$S^{p-1}$ and refined iteratively from single-ancilla Hadamard-test
measurements. It thereby demonstrated the \emph{potential} to locate the ground
state through Hadamard tests alone---without the proliferation of measurement
settings, the ansatz design, or the optimiser.
Concretely, Bayesian inference with a vMF belief progressively sharpens our
knowledge of where the solution may lie: conditioning the prior on a
\emph{single} success outcome yields a closed-form update that provably
increases both the overlap with the ground state and the concentration of the
posterior at every iteration. This establishes a viable starting point rather
than a complete method.

In that scheme each iteration conditions on the minimal possible evidence---a
single outcome---so it is natural to ask whether supplying \emph{more}
information per iteration would accelerate convergence toward the ground state.
We make this concrete by conditioning each update on $m$ success outcomes, and
ask the two questions that form the core of this paper: does the algorithm still
locate the same solution when $m>1$, and, if so, can we identify an optimal $m$
for a practical implementation? Answering these questions requires a closed-form general-$m$ update, which the
moment-matching route of the companion paper cannot supply. We instead
exploit the \emph{large-concentration regime}, in which a first-order expansion
of the log-likelihood reduces the posterior to a closed-form vMF distribution. Based on the theory, we propose a practical algorithm and test with  $\mathrm{HeH}^+$, Hydrogen molecule, and randomly generated Hamiltonian matrices to investigate the theoretical predictions.

\section{Method}
\label{sec:method}

\subsection{Recap of the single-outcome framework}
\label{sec:recap}

We briefly recall the ingredients established in the companion
manuscript, referring the reader there for full details. We aim to find the ground state $|h_0\rangle$ and ground energy $\varepsilon_0$ of a non-degenerate Hamiltonian $H = \sum_{n=0}^{d-1} \varepsilon_n |h_n\rangle \langle h_n|$.  We assume that energy lower and upper bounds may be available. However, in this work we do not require any prior knowledge about eigen-states of the Hamiltonian.  The outcome probability function $p(x\mid|\psi\rangle) = \frac{1}{2}[1+(-1)^x\sum_{n=0}^{d-1}\cos\varepsilon_n t |\langle \psi | h_n\rangle|^2] =\frac{1}{2}[1+(-1)^x\langle \psi|V|\psi\rangle]$ where $V =\sum_{n=0}^{d-1}\cos\varepsilon_nt |h_n\rangle \langle h_n|$.
A pure state $|\psi\rangle=\sum_{k=1}^d(\psi_k^r+i\psi_k^i)|k\rangle$ in this work is identified with a real unit vector $\vpsi=(\psi_1^r, \psi_1^i, \psi_2^r, \psi_2^i,\cdots,\psi_n^r, \psi_i^d)\in S^{2d-1}$ ~\cite{Companion.paper}, and our
belief about it is a vMF distribution
$p_0(\vpsi\mid\vmu_0,\kappa_0)=C_{2d}(\kappa_0)\exp(\kappa_0\vmu_0^{\rm T}\vpsi)$,
with mean direction $\vmu_0$, concentration $\kappa_0$, and normalisation
$C_{2d}(\kappa_0)$ (see Ref.~\cite{Mardia.1999} for more details about this distribution). A Hadamard test on a single ancilla returns a binary outcome
$x\in\{0,1\}$ with likelihood
\begin{equation}
	p(x\mid\vpsi)=\tfrac12\bigl[1+(-1)^x\,\vpsi^{\rm T}\vW\vpsi\bigr],
	\label{eq:likelihood}
\end{equation}
where $x=0$ is the \emph{success} outcome and the real symmetric matrix $\vW$
stores the spectral information of the Hamiltonian with the same role as the matrix $V$~\cite{Companion.paper}. As in Ref.~\cite{Companion.paper}, we assume that with appropriate value of the time $t$, the energy levels satisfy
$\cos(\varepsilon_0 t)>\cos(\varepsilon_1 t)\cdots \cos(\varepsilon_{d-1}t)$,
the success probability $p(0\mid\vpsi)=\tfrac12(1+\vpsi^{\rm T}\vW\vpsi)$ is
maximised precisely when $\vpsi$ coincides with the ground state $\vmu_t$, the
leading eigenvector of $\vW$ with eigenvalue $\lambda_0=\cos(\varepsilon_0 t)$.

\begin{figure}[htpb]
	\centering
		\includegraphics[width=0.7\linewidth]{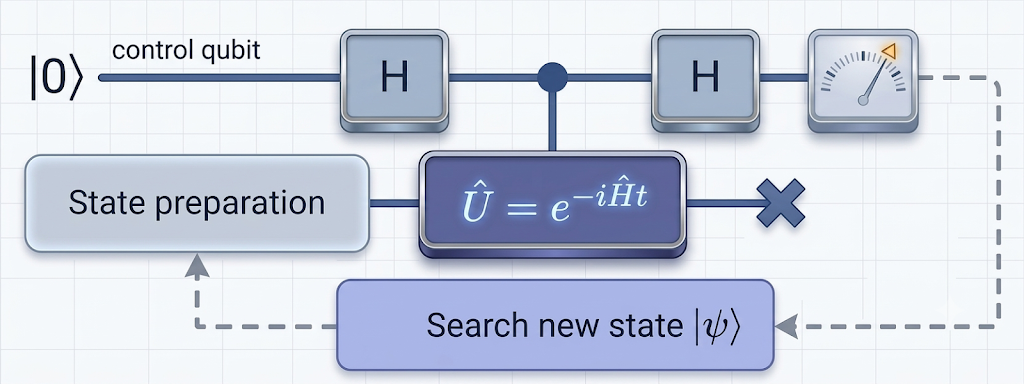}
	\caption{A typical Hadamard test measurement scheme with a control-unitary operator. We just perform a two-outcome measurement on the control qubit only.}
	\label{fig:meascheme}

\end{figure}

For a single success outcome, Bayes' theorem
$p_1(\vpsi\mid D)=z^{-1}p(D\mid\vpsi)p_0(\vpsi)$ was treated in
Ref.~\cite{Companion.paper} by moment matching: the exact posterior, a vMF
density multiplied by the quadratic factor $(1+\vpsi^{\rm T}\vW\vpsi)$, is
projected back onto the vMF family by matching its first two moments, yielding
coefficients $\alpha_0^{\rm exact}$ and $\beta_0^{\rm exact}$ expressed through
ratios of modified Bessel functions. The resulting update reads
\begin{equation}
	\vmu_1=\frac{\alpha_0^{\rm exact}\vmu_0+\beta_0^{\rm exact}\vW\vmu_0}
	{\lVert\alpha_0^{\rm exact}\vmu_0+\beta_0^{\rm exact}\vW\vmu_0\rVert},
	\label{eq:single_update}
\end{equation}
and two properties were proved: the overlap with the true solution does not decrease, and the concentration does not
decrease. Iterating therefore drives the estimate
monotonically toward the ground state.

\subsection{Relation to the present generalisation and choice of approximation}
\label{sec:relation}

The natural question is why the same moment-matching approach is not used for
$m>1$ outcomes. With the $m$-success likelihood
$L(\mzero\mid\vpsi)=p(0\mid\vpsi)^m$, the exact posterior is a vMF density
multiplied by $(1+\vpsi^{\rm T}\vW\vpsi)^m$. Moment matching would require
evaluating the moments of this posterior up to second order, which by the
recursive tensor-moment formula involves moments of the vMF distribution up to
order $2m$. The closed-form moments of the vMF distribution are known up to fourth-order ~\cite{Hillen.2017, Moakher.2015} but grow
rapidly in algebraic complexity with the order; consequently
the resulting expressions do not yield a compact closed form, making moment
matching impractical as a general-$m$ update rule.

The approach adopted here instead exploits the large-concentration regime. When
$\kappa_0$ is large, the prior is sharply peaked about $\vmu_0$ and the states
carrying appreciable probability all lie within a small neighbourhood of it.
Over such a neighbourhood $\ln p(0\mid\vpsi)$ varies weakly and a first-order
Taylor expansion is justified, reducing the $m$-outcome posterior to a vMF
distribution in closed form with coefficients that depend on $m$ only through a
simple multiplicative weight. This is a different approximation from the
moment-matching projection of ~\cite{Companion.paper}: the two approaches
use different algebras to map the non-vMF exact posterior back onto the vMF
family, and their coefficients are not the same in general. Both share the same
structural update $\vmu_1\propto\alpha_0\vmu_0+\beta_0\vW\vmu_0$, which tilts
the mean toward the dominant eigenvector of $\vW$. The large-concentration assumption is not a restriction on the
algorithm in practice: the concentration $\kappa$ is a free parameter of the vMF sampler
that the practitioner sets explicitly. One may choose $\kappa_0$ large enough
for the approximation to be accurate from the very first iteration.

\subsection{Generalisation to $m$ success outcomes}
\label{sec:generalisation}

Treating the $m$ measurements as independent, the $m$-success likelihood is
\begin{equation}
	L(\mzero\mid\vpsi)=p(0\mid\vpsi)^m
	=\Bigl[\tfrac12\bigl(1+\vpsi^{\rm T}\vW\vpsi\bigr)\Bigr]^{m}.
	\label{eq:mlikelihood}
\end{equation}
Substituting into Bayes' theorem and taking the logarithm,
\begin{equation}
	\ln p_1(\vpsi\mid 0)=\ln\frac{C_p(\kappa_0)}{z}
	+m\ln p(0\mid\vpsi)+\kappa_0\vmu_0^{\rm T}\vpsi,
	\label{eq:mlogpost}
\end{equation}
in which $m$ appears as a multiplicative weight on the nonlinear log-likelihood.
In the large-concentration regime the prior is sharply peaked about $\vmu_0$,
and expanding $\ln p(0\mid\vpsi)$ to first order about $\vmu_0$ gives the tangential gradient
\begin{equation}
	\bigl.\nabla_{\vpsi}\ln p(0\mid\vpsi)\bigr|_{\vpsi=\vmu_0}
	=\frac{1}{p(0\mid\vmu_0)}
	\bigl[\vW\vmu_0-(\vmu_0^{\rm T}\vW\vmu_0)\vmu_0\bigr].
	\label{eq:gradient}
\end{equation}
Inserting this into Eq.~(\ref{eq:mlogpost}) makes every term in the exponent
linear in $\vpsi$, so the posterior recovers the vMF form with natural parameter
\begin{equation}
	\kappa_1\vmu_1
	=\Bigl[\kappa_0-\frac{m\,\vmu_0^{\rm T}\vW\vmu_0}{p(0\mid\vmu_0)}\Bigr]\vmu_0
	+\frac{m}{p(0\mid\vmu_0)}\,\vW\vmu_0.
	\label{eq:natural_param}
\end{equation}
Writing the update compactly as
\begin{equation}
	\vmu_1=\frac{\alpha_0\vmu_0+\beta_0\vW\vmu_0}
	{\lVert\alpha_0\vmu_0+\beta_0\vW\vmu_0\rVert},
	\qquad
	\kappa_1=\lVert\alpha_0\vmu_0+\beta_0\vW\vmu_0\rVert,
	\label{eq:mupdate}
\end{equation}
the coefficients are
\begin{equation}
	\alpha_0=\kappa_0-\frac{m\,\vmu_0^{\rm T}\vW\vmu_0}{p(0\mid\vmu_0)},
	\qquad
	\beta_0=\frac{m}{p(0\mid\vmu_0)}.
	\label{eq:alphabeta}
\end{equation}

Now we obtain the two following theorems (see details in Appendix~\ref{app:convergence}).

\begin{theorem}[Non-decreasing overlap]
	\label{thm:overlap_m}
	Let $\vmu_1$ be obtained from a vMF prior $(\vmu_0,\kappa_0)$ through the
	$m$-success update of Eqs.~(\ref{eq:mupdate})--(\ref{eq:alphabeta}) with $m \leq \kappa_0/2$, and
$\vmu_t$ denotes the ground state.
	Then
	\begin{equation}
		R=\left(\frac{\vmu_t^{\rm T}\vmu_1}{\vmu_t^{\rm T}\vmu_0}\right)^2\geq 1,
		\label{eq:moverlap}
	\end{equation}
	with equality only when $\vmu_0 \in \operatorname{span}\{\vmu_t,\vmu_t^s\}$, where $\vmu_t^s$ denotes the conjugate representation, as defined in Ref.~\cite{Companion.paper}.
\end{theorem}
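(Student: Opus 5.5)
The plan is a direct computation in the eigenbasis of $\vW$. It reduces $R\geq 1$ to a sign condition that the hypothesis $m\leq\kappa_0/2$ guarantees.

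\emph{Step 1: reduce $R\geq1$ to a norm inequality.} Since $\vW$ is real symmetric and $\vmu_t$ is an eigenvector with eigenvalue $\lambda_0$, we have $\vmu_t^{\rm T}\vW\vmu_0=\lambda_0\vmu_t^{\rm T}\vmu_0$. Applying this to Eq.~(\ref{eq:mupdate}) gives
\begin{equation*}
\vmu_t^{\rm T}\vmu_1=\frac{(\alpha_0+\beta_0\lambda_0)\,\vmu_t^{\rm T}\vmu_0}{\lVert\alpha_0\vmu_0+\beta_0\vW\vmu_0\rVert}.
\end{equation*}
Assuming $\vmu_t^{\rm T}\vmu_0\neq0$ (otherwise $R$ is undefined), this yields
\begin{equation*}
R=\frac{(\alpha_0+\beta_0\lambda_0)^2}{\lVert\alpha_0\vmu_0+\beta_0\vW\vmu_0\rVert^2}.
\end{equation*}
So it suffices to show $\Delta:=(\alpha_0+\beta_0\lambda_0)^2-\lVert\alpha_0\vmu_0+\beta_0\vW\vmu_0\rVert^2\geq0$.

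\emph{Step 2: expand $\vmu_0$ in the eigenbasis of $\vW$.} Use an orthonormal eigenbasis of $\vW$ on $\mathbb{R}^{2d}$. Each $\lambda_n=\cos(\varepsilon_n t)$ appears twice, with eigenvectors $\vmu_n$ and $\vmu_n^s$. Let $w_n\geq0$ be the total squared weight of $\vmu_0$ on the $\lambda_n$-eigenspace, so $\sum_n w_n=1$. Write $q=\vmu_0^{\rm T}\vW\vmu_0=\sum_n w_n\lambda_n$ and note $\vmu_0^{\rm T}\vW^2\vmu_0=\sum_n w_n\lambda_n^2$. Expanding the norm then gives
\begin{equation*}
\Delta=2\alpha_0\beta_0(\lambda_0-q)+\beta_0^2\Bigl(\lambda_0^2-\sum_n w_n\lambda_n^2\Bigr)
=\beta_0\sum_n w_n(\lambda_0-\lambda_n)\bigl[2\alpha_0+\beta_0(\lambda_0+\lambda_n)\bigr].
\end{equation*}
Here $\beta_0=m/p(0\mid\vmu_0)>0$, and $\lambda_0-\lambda_n\geq0$ with strict inequality for $n\neq0$ by non-degeneracy.

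\emph{Step 3: sign of the bracket.} This is the only place the hypothesis $m\leq\kappa_0/2$ enters, and it is the main step to get right. Substituting $\alpha_0=\kappa_0-\beta_0 q$ gives
\begin{equation*}
2\alpha_0+\beta_0(\lambda_0+\lambda_n)=2\kappa_0+\beta_0(\lambda_0-q)+\beta_0(\lambda_n-q).
\end{equation*}
The middle term is nonnegative because $q\leq\lambda_0$. For the last term use $\lambda_n\geq-1$, so $\lambda_n-q\geq-(1+q)$. Together with $\beta_0=2m/(1+q)$ this gives $\beta_0(\lambda_n-q)\geq-2m$. Hence the bracket is at least $2\kappa_0-2m\geq\kappa_0>0$.

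A crude bound such as $\lambda_n-q\geq-2$ would instead force a lower bound on $p(0\mid\vmu_0)$. It is the cancellation of $1+q$ against $p(0\mid\vmu_0)=(1+q)/2$ that makes the bound uniform in $\vmu_0$.

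\emph{Step 4: conclusion and equality.} Every term in the sum for $\Delta$ is nonnegative, so $\Delta\geq0$ and therefore $R\geq1$. Equality holds iff $w_n(\lambda_0-\lambda_n)=0$ for all $n$, i.e.\ $w_n=0$ for $n\neq0$. This means $\vmu_0$ lies in the $\lambda_0$-eigenspace of $\vW$, which is $\operatorname{span}\{\vmu_t,\vmu_t^s\}$ as in Ref.~\cite{Companion.paper}. This is exactly the stated equality case.
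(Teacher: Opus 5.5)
Your proof is correct. Its skeleton is exactly the paper's: the reduction of $R\ge1$ to $\Delta\ge0$, the factorisation $\Delta=\beta_0\sum_n w_n(\lambda_0-\lambda_n)\bigl[2\alpha_0+\beta_0(\lambda_0+\lambda_n)\bigr]$, and the equality analysis. The genuine difference is in Step~3.

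The paper first proves the identity $\alpha_0-\beta_0=\kappa_0-2m$ (its Lemma~1). It then uses the crude bound $\lambda_0+\lambda_n\ge-2$ to get $2\alpha_0+\beta_0(\lambda_0+\lambda_n)\ge2(\alpha_0-\beta_0)=2(\kappa_0-2m)$, and this is where $m\le\kappa_0/2$ enters. You instead keep the term $\beta_0(\lambda_0-q)\ge0$ and use the cancellation $\beta_0(1+q)=2m$ on $\lambda_n$ alone. In effect you bound $\lambda_0$ below by $q$ rather than by $-1$, which gives the sharper estimate $2(\kappa_0-m)$. Your argument therefore proves the theorem, equality case included, for every $1\le m<\kappa_0$; the denominator $\kappa_1\ge\kappa_0>0$ causes no trouble. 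This roughly doubles the admissible range of $m$. What the paper's route buys instead is the clean structural condition $\alpha_0\ge\beta_0$, which reuses the template of the companion paper's moment-matching proof and matches the cap $\lfloor\kappa_n/2\rfloor$ built into the algorithm.

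Two cosmetic points. Naming the eigenvectors $\vmu_n,\vmu_n^s$ clashes with the prior and posterior means $\vmu_0,\vmu_1$; the paper uses $\vvh_n,\vvh_n^s$. Also, the strictness $\lambda_0>\lambda_n$ for $n\neq0$ comes from the assumed strict ordering of the $\cos(\varepsilon_n t)$, not from non-degeneracy of $H$ by itself.
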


\begin{theorem}[Non-decreasing concentration]
	\label{thm:kappa_m}
	Under the same update, $\kappa_1\ge\kappa_0$, with equality only when $\vmu_0$
	is already an eigenvector of $\vW$.
	\label{eq:mkappa}
\end{theorem}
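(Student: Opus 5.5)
The plan is to read $\kappa_1$ directly off the closed-form natural parameter in Eq.~(\ref{eq:natural_param}). I will split $\vW\vmu_0$ into its component along $\vmu_0$ and its component orthogonal to $\vmu_0$; the concentration then reduces to a Pythagorean sum. The orthogonal component is exactly the tangential gradient of Eq.~(\ref{eq:gradient}) up to the factor $p(0\mid\vmu_0)$. The first-order expansion only adds a tangential tilt to the prior's natural parameter $\kappa_0\vmu_0$, and a tilt orthogonal to a vector can only lengthen it.

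\emph{Step 1 (decomposition).} I write $a=\vmu_0^{\rm T}\vW\vmu_0$ and $\boldsymbol{v}=\vW\vmu_0-a\vmu_0$. Since $\vmu_0$ is a unit vector, $\vmu_0^{\rm T}\boldsymbol{v}=0$. From Eq.~(\ref{eq:alphabeta}),
\begin{equation*}
\alpha_0+\beta_0 a=\kappa_0-\frac{m a}{p(0\mid\vmu_0)}+\frac{m a}{p(0\mid\vmu_0)}=\kappa_0 ,
\end{equation*}
so $\alpha_0\vmu_0+\beta_0\vW\vmu_0=\kappa_0\vmu_0+\beta_0\boldsymbol{v}$.

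\emph{Step 2 (norm).} By orthogonality of $\vmu_0$ and $\boldsymbol{v}$, Eq.~(\ref{eq:mupdate}) gives
\begin{equation*}
\kappa_1^2=\kappa_0^2+\beta_0^2\lVert\boldsymbol{v}\rVert^2\ \ge\ \kappa_0^2 .
\end{equation*}
Since $\kappa_0,\kappa_1\ge 0$, this yields $\kappa_1\ge\kappa_0$.

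\emph{Step 3 (equality case).} Here $\beta_0=m/p(0\mid\vmu_0)$ with $m\ge1$. Equality therefore holds iff $\boldsymbol{v}=0$, that is, iff $\vW\vmu_0=a\vmu_0$, which means $\vmu_0$ is an eigenvector of $\vW$. Conversely, if $\vmu_0$ is an eigenvector, then $\boldsymbol{v}=0$ and $\kappa_1=\kappa_0$ (and $\vmu_1=\vmu_0$).

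The only point requiring care, and the main potential obstacle, is that $\beta_0$ must be finite and strictly positive, i.e.\ $p(0\mid\vmu_0)=\tfrac12(1+a)>0$. This holds whenever the update of Eq.~(\ref{eq:alphabeta}) is defined at all, since the spectrum of $\vW$ lies in $[-1,1]$. The degenerate case $a=-1$ forces $\vmu_0$ to be an eigenvector with eigenvalue $-1$, where the update is undefined, so I will state it as excluded. The proof does not use the restriction $m\le\kappa_0/2$ from Theorem~\ref{thm:overlap_m}. That restriction guarantees $\alpha_0>0$, which matters for the overlap but not for the concentration, so the result holds for every $m\ge1$.
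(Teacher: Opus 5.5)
Your proof is correct and is essentially the paper's argument: the paper squares the norm of the natural parameter to obtain $\kappa_1^2=\kappa_0^2+\beta_0^2\bigl[\vmu_0^{\rm T}\vW^2\vmu_0-(\vmu_0^{\rm T}\vW\vmu_0)^2\bigr]$ and invokes Cauchy--Schwarz. Your bracket is exactly $\lVert\boldsymbol{v}\rVert^2$, so the orthogonal decomposition simply makes the non-negativity and the equality case ($\boldsymbol{v}=0$ iff $\vW\vmu_0\parallel\vmu_0$) visible directly, and, like the paper's proof, it does not use $m\le\kappa_0/2$.
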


Theorems ~\ref{thm:overlap_m} and ~\ref{thm:kappa_m} are complementary and together characterise the full
Bayesian posterior at every iteration. Theorem 1 guarantees that the
mean vector $\vmu_n$ — the current best estimate of the ground state —
moves monotonically closer to the true solution $\vmu_t$. Theorem 2 guarantees that the concentration
parameter $\kappa_n$ increases monotonically: the posterior distribution
sharpens about the improving estimate, so the uncertainty shrinks
naturally as the evidence accumulates. Taken together, the two theorems describe a self-consistent Bayesian inference loop.

\subsection{Practical algorithm}
\label{sec:algorithm}

The closed-form update of Eqs.~(\ref{eq:mupdate})--(\ref{eq:alphabeta})
references $\vmu_n^{\rm T}\vW\vmu_n$ and $\vW\vmu_n$ explicitly. In any
realistic setting $\vW$ encodes an \emph{unknown} Hamiltonian and is
unavailable, so the update must be reconstructed from measurements alone.  The posterior mean is by definition the expectation of $\vpsi$ under the
posterior. Evaluating this expectation by Monte Carlo with states $\vpsi_k$
drawn from the current vMF prior (Appendix~\ref{app:montecarlo}) gives
\begin{equation}
	\mathbb{E}_{n+1}[\vpsi]
	\approx\frac{\sum_{k=1}^N p(x=0\mid\vpsi_k)^m\,\vpsi_k}
	{\sum_{k=1}^N p(x=0\mid\vpsi_k)^m}
	=\sum_{k=1}^{N}\alpha_k\,\vpsi_k,
	\quad
	\alpha_k=\frac{p(x=0\mid\vpsi_k)^m}
	{\sum_{j=1}^{N}p(x=0\mid\vpsi_j)^m}.
	\label{eq:practical_mean}
\end{equation}
The matrix $\vW$ disappears entirely, surviving only implicitly through the
measured success probabilities. From the estimated mean the updated
vMF parameters follow immediately,
\begin{equation}
	R_{n+1}=\bigl\lVert\mathbb{E}_{n+1}[\vpsi]\bigr\rVert,
	\qquad
	\vmu_{n+1}=\frac{\mathbb{E}_{n+1}[\vpsi]}{R_{n+1}},
	\label{eq:R_mu}
\end{equation}
with $\kappa_{n+1}$ recovered from $R_{n+1}$.

In the finite-sample regime, the choice of $m$ is critical. With only $N$
sampled states and noisy frequencies $f_k\approx p(x=0\mid\vpsi_k)$, a large
$m$ causes the importance weights $\alpha_k\propto f_k^{\,m}$ to collapse
onto the single state with highest measured frequency, stalling the sampler.
Conversely, too small $m$ yields uniform weights and no directional update.
We therefore select $m$ at each iteration by controlling the contribution
of the best candidate in the current sample set.

Let the measured frequencies be sorted in decreasing order,
$f_1\ge f_2\ge\cdots\ge f_N$, so that $f_1$ is the largest. The weight
carried by this leading candidate in Eq.~(\ref{eq:practical_mean}) is
\begin{equation}
	w_1(m)=\frac{f_1^{\,m}}{\sum_{k=1}^{N}f_k^{\,m}}
	=\Bigl[1+\sum_{k=2}^{N}\Bigl(\frac{f_k}{f_1}\Bigr)^{m}\Bigr]^{-1}.
	\label{eq:w1}
\end{equation}
Since $f_k/f_1\le1$, $w_1(m)$ increases monotonically from $1/N$ at $m=0$
toward $1$ as $m\to\infty$. We choose the positive integer $m$ for which
the leading candidate contributes a target fraction $\rho$ of the new mean
vector at $n$-th iteration,
\begin{equation}
	m^*=\min (\underset{m\in\mathbb{Z}_{\ge1}}{\arg\min}\;
	\bigl|\,w_1(m)-\rho\,\bigr|, \lfloor\kappa_n/2\rfloor),
	\qquad \rho\approx0.1,
	\label{eq:m_rho}
\end{equation}
i.e.\ $f_1^{\,m}/\sum_k f_k^{\,m}\approx0.1$. Because
$w_1(m)$ is monotone in $m$, Eq.~(\ref{eq:m_rho}) is solved by a simple
integer scan and costs no additional measurements. The value of $m^*$ is
recomputed at every iteration from the current frequencies, adapting
automatically to the spread of the sample set as the belief contracts. We
found $\rho\approx0.1$ to be stable across the systems tested; a systematic
scan over $\rho$ is left to future work.

Each iteration proceeds in three stages. \emph{First}, we \textbf{sample}: $N$
candidate states $\{\vpsi_k\}_{k=1}^{N}$ are drawn from
$p_n(\vpsi\mid\vmu_n,\kappa_n)$ (use the sampling method given
by~\cite{Kurz.2015}). \emph{Second}, we \textbf{measure}: for each
$\vpsi_k$ we perform $N_{\rm m}$ Hadamard-test measurements and record the
success frequency
\begin{equation}
	f_k=\frac{N_s}{N_{\rm m}}\;\approx\;p(x=0\mid\vpsi_k).
	\label{eq:frequency}
\end{equation}
\emph{Third}, we \textbf{update}: replacing probabilities in
Eq.~(\ref{eq:practical_mean}) by frequencies and selecting $m_n$ via
Eq.~(\ref{eq:m_rho}),
\begin{equation}
	\mathbb{E}_{n+1}[\vpsi]\approx\sum_{k=1}^{N}\alpha_k\,\vpsi_k,
	\qquad
	\alpha_k=\frac{f_k^{\,m_n}}{\sum_{j=1}^{N}f_j^{\,m_n}},
	\label{eq:update_freq}
\end{equation}
after which Eq.~(\ref{eq:R_mu}) gives $R_{n+1}$ and $\vmu_{n+1}$.

The concentration is updated in two regimes. In the \emph{global} regime it is
estimated from $R_{n+1}$ via the Sra approximation~\cite{Sra.2012}, a widely
used closed-form approximant to the maximum-likelihood estimate of the vMF
concentration that avoids inverting ratios of modified Bessel functions,
\begin{equation}
	\hat{\kappa}_{n+1}=\frac{R_{n+1}(2d-R_{n+1}^{2})}{1-R_{n+1}^{2}},
	\label{eq:sra}
\end{equation}
which drives $\kappa$ upward as the distribution contracts; the approximation is
accurate for moderate-to-large $R$ but becomes ill-conditioned as $R\to 1$. The
\emph{local} regime therefore caps the concentration at $\kappa_{\rm local}$,
\begin{equation}
	{\kappa}_{n+1}=\min\!\bigl(\hat{\kappa}_{n+1},\,\kappa_{\rm local}\bigr),
	\label{eq:kappa_local}
\end{equation}
guarding against ill-conditioning of the Sra estimate near $R\to 1$ and
preventing an overly sharp distribution from trapping the search at a local
optimum. The hyperparameters $N$, $N_{\rm m}$, and $\kappa_{\rm local}$ are not
optimised in this work; the reported resource counts should be regarded as
illustrative rather than as optimised requirements, and their systematic study
is deferred to future work.

\begin{algorithm}[H]
	\caption{Bayesian VQE with $m$ success outcomes (measurement-based)}
	\label{alg:bvqe}
	\begin{algorithmic}[1]
		\Require initial mean $\vmu_0$, concentration $\kappa_0$,
		samples $N$, measurements per state $N_{\rm m}$,
		ceiling $\kappa_{\rm local}$
		\Ensure ground-state estimate $\vmu_n$ and energy estimate
		\State $n \gets 0$
		\While{termination criterion not met}
		\State draw $N$ states $\{\vpsi_k\}$ from
		$p_n(\vpsi\mid\vmu_n,\kappa_n)$ \Comment{vMF sampling}
		\For{each sampled state $\vpsi_k$}
		\State measure $N_{\rm m}$ Hadamard tests; record successes $N_s$
		\State $f_k \gets N_s/N_{\rm m}$
		\Comment{$f_k \approx p(x=0\mid\vpsi_k)$}
		\EndFor
		\State compute $m$ from $\{f_k\}$
		\State $\alpha_k \gets f_k^{\,m}/\sum_{j}f_j^{\,m}$ for all $k$
		\State $\mathbb{E}_{n+1}[\vpsi] \gets \sum_{k}\alpha_k\vpsi_k$
		\State $R_{n+1} \gets \lVert\mathbb{E}_{n+1}[\vpsi]\rVert$,\quad
		$\vmu_{n+1}\gets\mathbb{E}_{n+1}[\vpsi]/R_{n+1}$
		\State $\kappa_{n+1}\gets
		\min(\hat{\kappa}_{n+1},\kappa_{\rm local})$
		\Comment{Sra approximation}
		\State $n\gets n+1$
		\EndWhile
		\State \Return $\vmu_n$ and energy from success frequency at $\vmu_n$
	\end{algorithmic}
\end{algorithm}
The iteration is terminated once a suitable stopping criterion is met (target
fidelity or energy tolerance, stabilisation of $\vmu_n$, or exhaustion of an
iteration budget), and the output is the final mean vector together with estimated energy  $\varepsilon_{est}= \cos^{-1}(2f_0-1)/t$, where $f_0$ is the relative frequency corresponding to the last mean vector. The whole procedure is summarised in Algorithm~\ref{alg:bvqe}.

Operationally, each iteration is a sample--reweight--recentre cycle: 
$N$ candidate directions are drawn from the current belief, reweighted 
by the $m$-success likelihood, and the belief is recentred on the 
resulting weighted mean. It can be considered as  a parametric variant of sequential 
importance sampling ~\cite{Liu.2001, Chopin.2002, Moral.2006} on $S^{p-1}$: the vMF prior acts as the proposal, 
the $m$-success likelihood supplies the importance weights, and the 
vMF refit at each step compresses the weighted information into a 
compact parametric belief. Because the belief state is kept compact 
as $(\boldsymbol{\mu}_n,\kappa_n)$, no long-lived particle cloud is 
required; the concentration $\kappa_n$ sharpens the proposal 
automatically as evidence accumulates, without requiring a hand-tuned 
annealing schedule.

\subsection{Comparisons with other works}
\label{sec:heh}

As a benchmark on a physically relevant system, we apply the method to the
Helium Hydride ion $\mathrm{HeH}^+$, the prototypical molecule of the original
VQE experiment~\cite{Peruzzo.2014}, and compare the number of trial states
required to reach the ground-state energy. In this case, the initial belief is uniform on $S^{2d-1}$ (meaning that $\kappa_0 = 0$). Since the uniform vMF has no mean direction, thus we fix an arbitrary reference vector $\vmu_0$ by normalising a random draw from $\mathcal{N}(\boldsymbol{0},\mathbf{I}_{2d})$, so that the first iteration is an unbiased random exploration. All reported results are averaged over such random initialisations.In their work, the ground-state energy converges after roughly fifty optimization steps. Our result for the same molecule, Fig.~\ref{fig:Peru_Santa}(a), plots the estimated energy $E_0$ against the cumulative number of trial states. It descends monotonically to the ground energy after about one hundred to one hundred and fifty trial states, then remains locked onto the exact value without oscillation. As the authors in ~\cite{Peruzzo.2014} employed Nelder-Mead optimizer, the number of trial states might be $100$ on average. This means that our method can find the solution with a comparable number of quantum states.

\begin{figure}[htpb]
	\centering
	\begin{subfigure}{0.52\textwidth}
		\includegraphics[width=\textwidth]{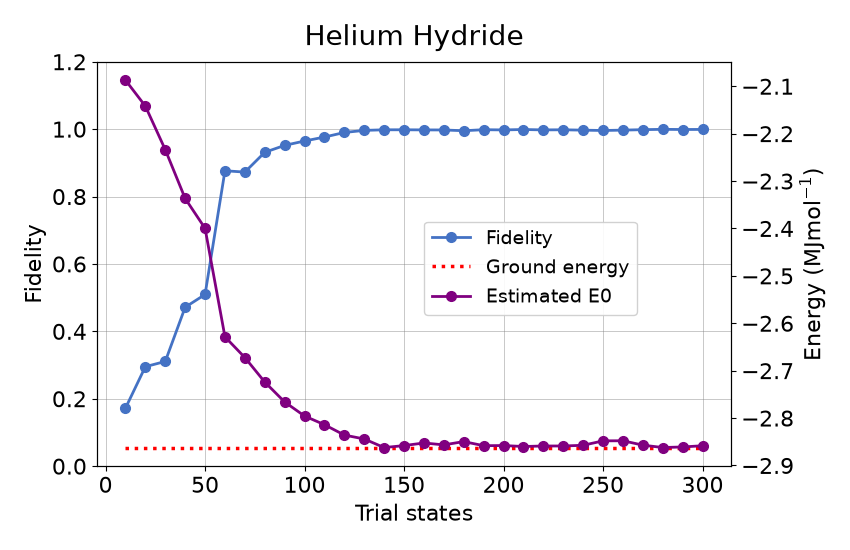}
	\end{subfigure}
		\hfill
	\begin{subfigure}{0.45\textwidth}
		\includegraphics[width=\textwidth]{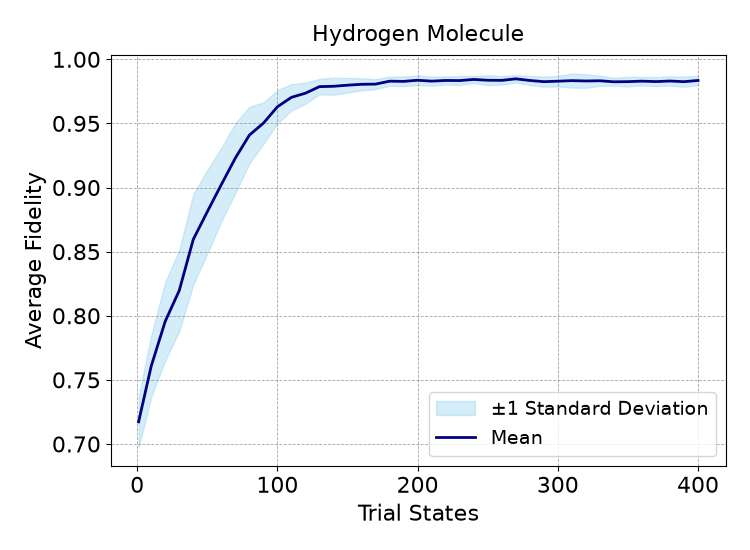}
	\end{subfigure}

	\caption{Searching ground state for Helium Hydride ion $\mathrm{HeH}^+$ and Hydrogen Molecule.
		(a) Estimated energy $E_0$ versus cumulative number
		of trial states in a single run, showing smooth monotonic descent to the true ground-state
		energy. In the first iteration, $N=10$ states are sampled uniformly from $S^{2d-1}$ because the initial concentration is set to $\kappa_0=0$. 
		(b) The starting guesses are initialized with fidelity $\approx 0.7$ to match the starting conditions of Ref.~\cite{Santagati.2018}, shaded band indicates statistical fluctuations. In both cases, we choose $\kappa_{\rm local}=500, N =10, N_m =100\,000$.}
			\label{fig:Peru_Santa}
\end{figure}

To place our method alongside the ancilla-based scheme, we
consider the four-qubit Hydrogen molecule $\mathrm{H}_2$, among the systems
studied by Santagati \emph{et al.}~\cite{Santagati.2018}. Because
Ref.~\cite{Santagati.2018} reports results for initial overlaps between $0.7$
and $0.8$, we initialise our sampler  to yield a
comparable starting fidelity of about $0.7$, so that the comparison reflects
algorithm efficiency rather than an initialisation advantage. The average
fidelity, shown in Fig.~\ref{fig:Peru_Santa}(b), climbs steeply past $0.98$ after
roughly one hundred to one hundred and fifty trial states, then remains
essentially flat, with a narrow band over random initial guesses.

\begin{table}[t]
	\centering
	\caption{Comparison of the present vMF-Bayesian scheme with the
		eigenstate-witness method of Santagati \emph{et al.}~\cite{Santagati.2018} on
		$\mathrm{H}_2$.}
	\label{tab:compare}
	\begin{tabular}{lll}
		\hline\hline
		Criterion & Santagati \emph{et al.}~\cite{Santagati.2018}
		& This work \\
		\hline
		Ansatz & problem-specific & general real-vector \\
		Convergence guarantee & none & monotone, step-by-step \\
		Measurement & phase estimation + witness & single Hadamard test \\
		Prior knowledge & required & Energy bounds \\
		Optimiser & swarm optimisation & none (Bayesian update) \\
		\hline\hline
	\end{tabular}
\end{table}

The comparison is not like-for-like, and the differences are best summarised
along the axes of Table~\ref{tab:compare}. The eigenstate-witness scheme of
Ref.~\cite{Santagati.2018} combines phase estimation with a swarm optimiser and a
problem-specific ansatz purpose-built for $\mathrm{H}_2$, encoding chemical and
symmetry information into a small number of parameters and restricting the search
to a low-dimensional manifold known to contain the solution; this requires prior
knowledge of the very state one seeks and must be redesigned for each system. Our
method employs the general real-vector parametrisation, searching over the entire
Hilbert space with no assumption about the target state, and provides a
step-by-step monotone-improvement guarantee that the swarm optimiser does not.
That we reach a fidelity above $0.98$ with a comparable number of trial states,
using far more parameters and no bespoke ansatz. The theoretical
convergence limit is unit fidelity; the gap between this limit and the observed
$\approx 0.98$ is attributable to finite $N$ and $N_{\rm m}$, the concentration
ceiling $\kappa_{\rm local}$, and residual approximation error from the
large-$\kappa$ linearisation at finite concentration. Closing this gap is a
matter of hyperparameter optimisation, which is left to future work.

\subsection{Randomly generated Hamiltonians}
\label{sec:random}

\begin{figure}[htpb]
	\centering
	\includegraphics[width=0.6\textwidth]{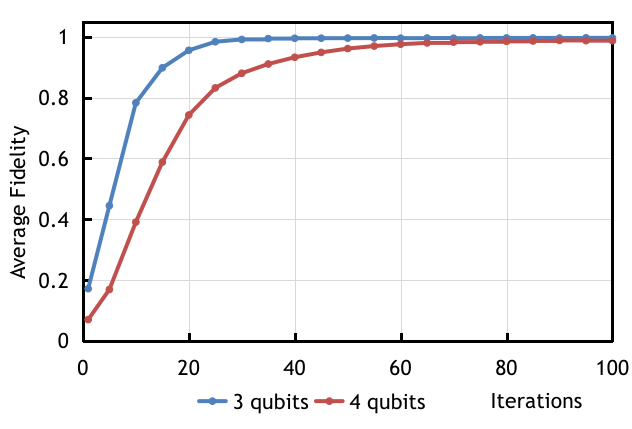}
	\caption{Average fidelity versus iteration number for randomly generated
		Hamiltonians with eigenvalues in $(0,\pi]$, averaged over one hundred
		independent instances. Blue: three-qubit systems ($N=300$,
		$N_{\rm m}=100{,}000$). Red: four-qubit systems ($N=500, N_m=100{,}000$). Both curves are
		monotonically increasing, consistent with the convergence guarantees.}
	\label{fig:random}
\end{figure}

To assess the method beyond specific molecular systems we test it on an ensemble
of randomly generated matrices whose eigenvalues are drawn uniformly from
$(0,\pi]$ with $t=1$. For each system size we generate one hundred independent
Hamiltonians, apply the algorithm from a random initial guess, and record the
average fidelity as a function of iteration number. The results are shown in
Fig.~\ref{fig:random} for three- and four-qubit systems.

The initial belief is uniform on $S^{2d-1}$, corresponding to $\kappa_0 = 0$. Since the uniform vMF distribution has no mean direction, the first iteration proceeds differently from the subsequent ones: we draw $N$ random vectors from a standard normal distribution $\mathcal{N}(\boldsymbol{0},\mathbf{I}_{2d})$, normalise each to unit length to obtain $N$ uniformly distributed states on the sphere, and feed them into the Hadamard tests. The resulting  mean, computed from the measured success frequencies via Eq.~(\ref{eq:update_freq}), directly yields the first updated mean direction $\vmu_1$ without any reference to a prior mean. From the second iteration onward, the algorithm proceeds as described above, with $\vmu_1$ and the corresponding $\kappa_1$ serving as the vMF parameters of the new belief. All reported results are averaged over such random initialisations.

It can be seen that in both cases the average fidelity is
monotonically increasing, with no non-monotonic step or plateau, consistent with
Theorems~\ref{thm:overlap_m} and~\ref{thm:kappa_m}. The gap between the
theoretical limit of unit fidelity and the observed values is due to the same
finite-sample and hyperparameter effects discussed. The
systematic dependence on system size is expected: as the Hilbert-space dimension
grows the search space expands and more iterations are needed. That the
algorithm behaves robustly across one hundred random instances provides evidence
that the monotone convergence is genuinely universal and does not rely on special
structure of the target state.

		\section{Conclusion}
		\label{sec:conclusion}
		
		We have generalised the Bayesian variational quantum eigensolver of the
		companion manuscript from a single success outcome ($m=1$) to a family of
		$m$-success updates, equivalent to conditioning the belief on those
		measurement records in which the favourable outcome is observed $m$ times.
		We proved that, in the large-concentration regime, the posterior obtained
		from a vMF prior and $m$ successes is again a vMF distribution admitting a
		closed-form update, in which $m$ controls the relative weight between the
		prior direction $\vmu_0$ and the Hamiltonian direction $\vW\vmu_0$. From
		this update we established two complementary convergence properties: the
		mean vector moves monotonically closer to the ground state, while the
		concentration parameter grows monotonically in average, so that the posterior sharpens
		and the uncertainty about the estimate contracts as evidence accumulates.
		In the ideal closed-form update these properties guarantee that the joint
		belief state $(\vmu_n,\kappa_n)$ converges monotonically toward the ground
		state. In the practical Monte Carlo realisation, finite sampling and the concentration
		ceiling $\kappa_{\rm local}$ limit the attainable precision, but the
		numerical experiments confirm that the monotone convergence trend persists.
	We further proposed a heuristic adaptive rule for $m^\star$, estimated on the fly from measured frequencies, which controls the contribution of the leading candidate to the importance-weighted mean while remaining within the admissible range that preserves both convergence guarantees.

		The scheme occupies a distinctive position among variational and Bayesian
		quantum algorithms. Because belief is placed directly on the state vector
		and refined by Bayesian conditioning, there is no parameterised circuit and
		no energy landscape to traverse, so the method can be less affected by the
		barren plateaus that obstruct gradient-based and gradient-free VQE as the
		system grows~\cite{McClean2018, Cerezo.2021, Arrasmith.2021}. On the molecular benchmarks it
		reaches a fidelity above $0.98$ for $\mathrm{H}_2$ with a trial-state count
		comparable to Ref.~\cite{Santagati.2018}, yet without the problem-specific
		ansatz that method requires, and it traces a smooth, monotone energy
		trajectory for $\mathrm{HeH}^+$ in contrast to the non-monotonic behaviour
		of the original VQE~\cite{Peruzzo.2014}.
%
%
%
		
		The method nonetheless has limitations. First, the general real-vector
		parametrisation searches the full state sphere $S^{2d-1}$, whose dimension
		grows exponentially with the number of qubits; the number of parameters, and
		hence the sampling cost, therefore rises rapidly with system size, and the
		demonstrations here are confined to three- and four-qubit systems. Second,
		a practical hardware realisation remains challenging: each trial state must
		be physically prepared and interrogated by many ancilla shots, so the total
		measurement budget is large and a controlled implementation on noisy devices
		is far from straightforward. Third, the algorithm carries several control
		hyperparameters --- the number of samples $N$, the shots per state
		$N_{\rm m}$, the concentration ceiling $\kappa_{\rm local}$, and the  target fraction governing $m^\star$ --- whose systematic optimisation, only preliminarily
		explored here, will be needed to close the gap between the observed fidelity
		and the theoretical limit of unity.

		Weighing these strengths against the limitations, the combination of
		Bayesian inference with the von Mises--Fisher distribution opens a promising
		research direction: it marries a powerful and well-understood statistical
		inference framework with a probability distribution naturally suited to the
		spherical geometry of quantum states. Because the update rule and the vMF
		belief are agnostic about the form of the Hamiltonian, the same
		sampling-and-update machinery may extend to Bayesian quantum state
		tomography, variational circuit learning, and quantum machine learning
		wherever a task reduces to optimising a success probability over a space of
		unit vectors. We expect this vMF-Bayesian approach to prove
		fruitful across a broad range of problems in quantum information and quantum
		computing.
		
		\begin{acknowledgments}
			JL was supported by the National Research Foundation of Korea (NRF) grant (No. 2022M3E4A1077369 and NRF-2023M3K5A1092818) funded by the Korea government (MSIT).
		\end{acknowledgments}
		
		\bibliography{refs2}

		\appendix
		
		\section{Proof of the convergence of the mean vector}
		\label{app:convergence}
		
		
		Let $\vW$ be the real symmetric matrix defined in Eq.~(\ref{eq:likelihood}),
		with spectral decomposition
		\begin{equation}
			\vW = \sum_{n=0}^{d-1}\lambda_n\bigl(\vvh_n\vvh_n^{\rm T}
			+ \vvh_n^s(\vvh_n^s)^{\rm T}\bigr),
			\qquad \lambda_n = \cos(\varepsilon_n t),
		\end{equation}
		where $\{\vvh_n, \vvh_n^s\}_{n=0}^{d-1}$ is an orthonormal eigenbasis of
		$\mathbb{R}^p$ ($p=2d$) and the eigenvalues are ordered as
		\begin{equation}
			\lambda_0 > \lambda_1 > \cdots > \lambda_{d-1}.
			\label{eq:eigen_order}
		\end{equation}
		Because the energies are scaled into $\varepsilon_n t \in (0,\pi]$, each
		eigenvalue lies in the interval
		\begin{equation}
			-1 \leq \lambda_n \le \lambda_0 < 1.
			\label{eq:eigen_range}
		\end{equation}
		The ground state $\vmu_t$ is any unit vector in the top eigenspace
		$\operatorname{span}\{\vvh_0,\vvh_0^s\}$, with $\vW\vmu_t = \lambda_0\vmu_t$.
		
		For an arbitrary unit vector $\vmu_0 \in S^{p-1}$ we define the spectral
		weights
		\begin{equation}
			w_n := (\vvh_n^{\rm T}\vmu_0)^2 + \bigl((\vvh_n^s)^{\rm T}\vmu_0\bigr)^2
			\;\ge\; 0,
			\qquad \sum_{n=0}^{d-1} w_n = 1,
			\label{eq:spectral_weights}
		\end{equation}
		so that for any integer $k \ge 1$,
		\begin{equation}
			\vmu_0^{\rm T}\vW^k\vmu_0 = \sum_{n=0}^{d-1}\lambda_n^k\,w_n.
			\label{eq:spectral_moments}
		\end{equation}
		
		Finally, recall the coefficients from Eq.~(\ref{eq:alphabeta}):
		\begin{equation}
			\alpha_0 = \kappa_0 - \frac{m\,\vmu_0^{\rm T}\vW\vmu_0}{p(0\mid\vmu_0)},
			\qquad
			\beta_0 = \frac{m}{p(0\mid\vmu_0)} > 0,
			\label{eq:alphabeta_app}
		\end{equation}
		where $p(0\mid\vmu_0) = \tfrac12(1 + \vmu_0^{\rm T}\vW\vmu_0) > 0$.

		The companion manuscript's proof of the non-decreasing overlap property
		exploited a structural inequality $\alpha_0 \ge \beta_0$ that follows from
		identities of modified Bessel functions under moment matching. Here the
		coefficients $\alpha_0,\beta_0$ have the different form
		\eqref{eq:alphabeta_app}, and we must impose an analogous condition.
		
		\begin{lemma}[A sufficient condition for $\alpha_0 \ge \beta_0$]
			\label{lem:alpha_ge_beta}
			For any $\vmu_0 \in S^{p-1}$ and any $m \ge 1$, if
			\begin{equation}
				m \;\le\; \frac{\kappa_0}{2},
				\label{eq:m_le_kappa}
			\end{equation}
			then $\alpha_0 \ge \beta_0 > 0$.
		\end{lemma}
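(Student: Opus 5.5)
The plan is to compute $\alpha_0-\beta_0$ directly from Eq.~(\ref{eq:alphabeta_app}) and show that it reduces to $\kappa_0-2m$. No spectral estimate is needed, because the dependence on $\vmu_0$ should cancel exactly.

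First, abbreviate $q:=\vmu_0^{\rm T}\vW\vmu_0$. By Eq.~(\ref{eq:spectral_moments}) with $k=1$, $q=\sum_n\lambda_n w_n$ is a convex combination of the eigenvalues. By Eq.~(\ref{eq:eigen_range}) this gives $-1\le q\le\lambda_0<1$, and hence $p(0\mid\vmu_0)=\tfrac12(1+q)\in[0,1)$. Positivity $p(0\mid\vmu_0)>0$ is part of the standing hypotheses stated alongside Eq.~(\ref{eq:alphabeta_app}); it fails only if $\vmu_0$ lies entirely in a $\lambda=-1$ eigenspace, in which case the update itself is undefined. Granting it, $\beta_0=m/p(0\mid\vmu_0)>0$ follows at once since $m\ge1$.

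Second, for the inequality $\alpha_0\ge\beta_0$, I will combine the two fractions over the common denominator $p(0\mid\vmu_0)$:
\begin{equation*}
\alpha_0-\beta_0=\kappa_0-\frac{m(q+1)}{p(0\mid\vmu_0)}=\kappa_0-\frac{m(1+q)}{\tfrac12(1+q)}=\kappa_0-2m .
\end{equation*}
The factor $1+q>0$ cancels, so the difference is independent of $\vmu_0$ and of the spectrum of $\vW$. Therefore $\alpha_0\ge\beta_0$ holds if and only if $m\le\kappa_0/2$, which is exactly condition (\ref{eq:m_le_kappa}). This shows that the sufficient condition is in fact also necessary.

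There is essentially no hard step here. The only point requiring care is the positivity of $p(0\mid\vmu_0)$, which justifies both the cancellation of $1+q$ and the strict inequality $\beta_0>0$. I will handle it by invoking the convex-combination bound on $q$ together with the nondegeneracy assumption already recorded after Eq.~(\ref{eq:alphabeta_app}).
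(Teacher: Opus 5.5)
Your proposal is correct and follows the paper's proof essentially verbatim: both use $1+\vmu_0^{\rm T}\vW\vmu_0=2\,p(0\mid\vmu_0)$ to cancel the $\vmu_0$-dependence, obtain $\alpha_0-\beta_0=\kappa_0-2m$, and conclude that the condition is in fact both necessary and sufficient. Your remark on the positivity of $p(0\mid\vmu_0)$, which fails only when $\vmu_0$ lies in a $\lambda=-1$ eigenspace, just makes explicit the standing assumption the paper records alongside Eq.~(\ref{eq:alphabeta_app}).
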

		
		\begin{proof}
			Using \eqref{eq:alphabeta_app} and the identity
			$1 + \vmu_0^{\rm T}\vW\vmu_0 = 2\,p(0\mid\vmu_0)$,
			\begin{equation}
				\alpha_0 - \beta_0
				= \kappa_0 - \frac{m\bigl(\vmu_0^{\rm T}\vW\vmu_0 + 1\bigr)}
				{p(0\mid\vmu_0)}
				= \kappa_0 - \frac{m \cdot 2\,p(0\mid\vmu_0)}{p(0\mid\vmu_0)}
				= \kappa_0 - 2m.
			\end{equation}
			Thus $\alpha_0 \ge \beta_0$ iff $m \le \kappa_0/2$.
		\end{proof}

			The condition \eqref{eq:m_le_kappa} is \emph{sufficient} but not strictly
			necessary: for specific values of $\lambda_n$ the bracket
			$2\alpha_0 + \beta_0(\lambda_0 + \lambda_n)$ can remain non-negative even
			when $\alpha_0 < \beta_0$, provided $\lambda_0 + \lambda_n$ is not too
			negative. In the large-concentration regime $\kappa_0 \gg 1$ that the
			algorithm operates in (Sec.~\ref{sec:relation}), and with the moderate
			values of $m$ produced by the adaptive rule~\eqref{eq:m_rho}, condition
			\eqref{eq:m_le_kappa} is typically satisfied. We return to this point in
			the remark following Theorem~\ref{thm:overlap_m}.

		\subsection{Non-decreasing overlap (Theorem~\ref{thm:overlap_m})}
		
		We now prove Theorem~\ref{thm:overlap_m}. Let us recall the true solution vector $\vmu_t \in \operatorname{span}\{\vvh_0,\vvh_0^s\}$. We need to prove that
		\begin{equation}
			R := \left(\frac{\vmu_t^{\rm T}\vmu_1}{\vmu_t^{\rm T}\vmu_0}\right)^2 \geq 1.
		\end{equation}

		\begin{proof}[Proof of Theorem~\ref{thm:overlap_m}]
			Because $\vmu_t$ is an eigenvector of $\vW$ with eigenvalue $\lambda_0$,
			$\vmu_t^{\rm T}\vW = \lambda_0\vmu_t^{\rm T}$, and hence
			\begin{equation}
				\vmu_t^{\rm T}\bigl(\alpha_0\vmu_0 + \beta_0\vW\vmu_0\bigr)
				= (\alpha_0 + \beta_0\lambda_0)\,\vmu_t^{\rm T}\vmu_0.
				\label{eq:num_app}
			\end{equation}
			The inequality $R \geq 1$ is equivalent to
			$\Delta \ge 0$, where
			\begin{equation}
				\Delta := (\alpha_0 + \beta_0\lambda_0)^2
				- \lVert\alpha_0\vmu_0 + \beta_0\vW\vmu_0\rVert^2.
				\label{eq:Delta_def}
			\end{equation}
			Expanding the norm,
			\begin{equation}
				\Delta
				= 2\alpha_0\beta_0\bigl(\lambda_0 - \vmu_0^{\rm T}\vW\vmu_0\bigr)
				+ \beta_0^2\bigl(\lambda_0^2 - \vmu_0^{\rm T}\vW^2\vmu_0\bigr).
				\label{eq:Delta_expanded}
			\end{equation}
			Substituting the spectral decompositions
			\eqref{eq:spectral_moments} and factoring out
			$\beta_0\sum_n w_n(\lambda_0 - \lambda_n)$ gives
			\begin{equation}
				\Delta
				= \beta_0\sum_{n=0}^{d-1} w_n\,(\lambda_0 - \lambda_n)
				\bigl[2\alpha_0 + \beta_0(\lambda_0 + \lambda_n)\bigr].
				\label{eq:Delta_factored}
			\end{equation}
			Each factor is non-negative:
			
			\begin{enumerate}
				\item $w_n \ge 0$ by Eq.~\eqref{eq:spectral_weights};
				
				\item $\lambda_0 - \lambda_n \ge 0$ for every $n$, with strict
				inequality for $n \ge 1$, by the ordering \eqref{eq:eigen_order};
				
				\item for the bracket, since $\lambda_n > -1$ by
				\eqref{eq:eigen_range} we have $\lambda_0 + \lambda_n > \lambda_0 - 1
				\ge -2$, and therefore, using
				Lemma~\ref{lem:alpha_ge_beta} ($\alpha_0 \ge \beta_0 > 0$),
				\begin{equation}
					2\alpha_0 + \beta_0(\lambda_0 + \lambda_n)
					\;\ge\; 2\alpha_0 - 2\beta_0
					\;=\; 2(\alpha_0 - \beta_0)
					\;\ge\; 0.
					\label{eq:bracket_bound}
				\end{equation}
				This inequality is strict whenever $\lambda_0 + \lambda_n > -2$, i.e.,
				whenever $\lambda_0 > -1$ or $\lambda_n > -1$; since $\lambda_0 > -1$ by
				\eqref{eq:eigen_range}, it holds strictly for every $n$.
			\end{enumerate}
			
			Hence $\Delta \ge 0$, which gives $R_c \ge 1$.
			
			\emph{Equality case.} Since $\beta_0 > 0$ and every term in
			\eqref{eq:Delta_factored} has $w_n \ge 0$,
			$\lambda_0 - \lambda_n > 0$ for $n \ge 1$, and the bracket strictly
			positive by \eqref{eq:bracket_bound}, we have $\Delta = 0$ if and only if
			$w_n = 0$ for every $n \ge 1$. This is equivalent to $\vmu_0 \in
			\operatorname{span}\{\vvh_0,\vvh_0^s\}$. The
			update then acts as the scalar $\lambda_0$ on this two-dimensional
			subspace and does not rotate within it, so $\vmu_1 = \vmu_0$ up to the
			normalisation that is already in force.
		\end{proof}

			Lemma~\ref{lem:alpha_ge_beta} shows that $\alpha_0 \ge \beta_0$ holds
			whenever $m \le \kappa_0/2$. This condition is \emph{automatically
				satisfied} in the large-concentration regime that the algorithm operates
			in: $\kappa_0$ is chosen large by the practitioner
			(Sec.~\ref{sec:relation}), while the adaptive rule~\eqref{eq:m_rho} selects
			$m_n$ from the empirical frequencies and, in the regime of sharply peaked
			priors $\kappa_n \gg 1$, keeps $m_n$ moderate. If a particular run produces
			$m_n > \kappa_n/2$, one may simply cap $m_n$ at
			$\lfloor\kappa_n/2\rfloor$ without sacrificing the structural update; the
			resulting algorithm still satisfies Theorems~\ref{thm:overlap_m}
			and~\ref{thm:kappa_m}. The condition is therefore a minor bookkeeping
			constraint on the hyperparameter rather than an obstruction to the method.

		\subsection{Non-decreasing concentration (Theorem~\ref{thm:kappa_m})}

		Writing $b=\beta_0=m/p(0\mid\vmu_0)$ and squaring the norm of the natural
		parameter,
		\begin{equation}
		\kappa_1^2=\kappa_0^2+b^2
		\bigl[\vmu_0^{\rm T}\vW^2\vmu_0-(\vmu_0^{\rm T}\vW\vmu_0)^2\bigr].
		\label{eq:kappa_growth}
		\end{equation}
		The bracket is non-negative by Cauchy--Schwarz,
		$(\vmu_0^{\rm T}\vW\vmu_0)^2\le\vmu_0^{\rm T}\vW^2\vmu_0$,
		so $\kappa_1\ge\kappa_0$, with equality only when $\vW\vmu_0\parallel\vmu_0$.
		\hfill$\square$

		\section{Monte Carlo derivation of the practical update}
		\label{app:montecarlo}

		The mean vector of the posterior is by definition
		\begin{align}
			\mathbb{E}_{n+1}[\vpsi]
			&=\frac{1}{z}\int d\vpsi\,\vpsi\,L(\mzero\mid\vpsi)\,p_n(\vpsi),
			\qquad
			z=\int d\vpsi\,L(\mzero\mid\vpsi)\,p_n(\vpsi),
			\label{eq:posterior_mean_def}
		\end{align}
		where $p_n(\vpsi)$ is the current vMF prior. This expression references $\vW$
		only through the likelihood, which is the success probability of the Hadamard
		test. Applying the law of large numbers with $N$ states $\vpsi_k$ drawn from
		$p_n(\vpsi)$,
		\begin{equation}
			z_n\approx\frac{1}{N}\sum_{k=1}^{N}L(\mzero\mid\vpsi_k),
			\qquad
			\mathbb{E}_{n+1}[\vpsi]
			\approx\frac{\sum_{k=1}^{N}L(\mzero\mid\vpsi_k)\,\vpsi_k}
			{\sum_{k=1}^{N}L(\mzero\mid\vpsi_k)},
			\label{eq:sample_mean}
		\end{equation}
		where the $1/N$ cancels. Inserting
		$L(\mzero\mid\vpsi)=p(x=0\mid\vpsi)^m$ gives
		\begin{equation}
			\mathbb{E}_{n+1}[\vpsi]
			\approx\sum_{k=1}^{N}\alpha_k\,\vpsi_k,
			\qquad
			\alpha_k=\frac{p(x=0\mid\vpsi_k)^m}
			{\sum_{j=1}^{N}p(x=0\mid\vpsi_j)^m},
			\label{eq:practical_mean_app}
		\end{equation}
		which is Eq.~(\ref{eq:practical_mean}) of the main text, with $\sum_k\alpha_k=1$
		by construction. 
		
		\end{document}